 \newtheorem{theorem}{Theorem}[section]
 \newtheorem*{remark}{Remark}
 \tikzstyle{H} = [circle, minimum width=1.5cm, minimum height=1cm,text centered, draw=black, fill=magenta!30]
 \tikzstyle{S} = [circle, minimum width=1.5cm, minimum height=1cm,text centered, draw=black, fill=blue!30]
 \tikzstyle{I} = [circle, minimum width=1.5cm, minimum height=1cm,text centered, draw=black, fill=red!30]
 \tikzstyle{W} = [circle, minimum width=1.5cm, minimum height=1cm,text centered, draw=black, fill=gray!30]
 \tikzstyle{R} = [circle, minimum width=1.5cm, minimum height=1cm,text centered, draw=black, fill=yellow!30]
 \tikzstyle{line} = [thick,-,>=stealth]
 \tikzstyle{arrow} = [thick,->,>=triangle 90]
 \tikzstyle{openA} = [thick,<-,>=open triangle 90]
 \tikzstyle{openB} = [thick,->,>=open triangle 90]
 \tikzstyle{Darrow} = [thick,double,->,>=triangle 90]
 \tikzstyle{sarrow} = [thick,->,>=triangle 90,shorten >=2cm]
 \tikzstyle{curveA} = [pil/.style={->,thick,shorten <=2pt,shorten >=2pt,}]
\begin{document}
 	
 		\title{Mathematical model of plant-virus interactions mediated by RNA interference}
 		
 \author{G. Neofytou,\hspace{0.5cm}Y.N. Kyrychko,\hspace{0.5cm}K.B. Blyuss\thanks{Corresponding author. Email: k.blyuss@sussex.ac.uk} 
\\\\ Department of Mathematics, University of Sussex, Falmer,\\
Brighton, BN1 9QH, United Kingdom}

\maketitle
 		
 		\begin{abstract}
 			Cross-protection, which refers to a process whereby artificially inoculating a plant with a mild strain provides protection against a more aggressive isolate of the virus, is known to be an effective tool of disease control in plants. In this paper we derive and analyse a new
 			mathematical model of the interactions between two competing viruses with particular account for RNA interference. Our results show that  co-infection of the host can either increase or decrease the potency of individual infections depending on the levels of cross-protection or cross-enhancement between different viruses. Analytical and numerical bifurcation analyses are employed to
 			investigate the stability of all steady states of the model in order to identify parameter
 			regions  where the system exhibits synergistic or antagonistic behaviour between viral strains, as well as different types of host recovery. We show that not only viral attributes but also the propagating component of RNA-interference in plants can play an important role in determining the dynamics.
 		\end{abstract}
 		
 	\section{Introduction}
 	
 	With a projected number of 9.7 billion people by the year 2050, the world population and its continuing growth is heavily dependent on a steady agricultural output in order to provide a sustainable food source.  In light of the agricultural stagnation experienced in the last decade, further fuelled by the public opposition to controversial newer practices \cite{Kuntz2014}, securing an adequate and reliable food source has never been more relevant. It is estimated that up to 40$\%$ of global crop production is lost due to pathogens, animals and weeds \cite{Savary2012}. This inevitably led to the development of different agricultural practices including the use of various pesticides and ultimately genetic engineering. Although significant efforts are made to increase crop yield and with a good degree of success, perhaps, a more effective or environmentally safe way to address this problem  hides in better investigating and  understanding methods that are currently employed. In this respect, mathematical models can provide invaluable insights into the dynamics of plant infections and allow better control over agricultural losses.
 	
 	Similar to the studies of infectious disease in humans,  mathematical modelling allows one to investigate how an infection propagates within a population of plants. As such, the interactions between healthy and infected plants can usually be described by empirically derived relationships between plants and an insect population which acts as the disease vector and is comparable to epidemic models of mosquito-borne diseases in humans \cite{Purcell2005}. Several mathematical models have also analysed the efficiency of simpler and more traditional methods of fighting plant infection, such as roguing and replanting, in which any plants afflicted by the disease are simply removed and replaced by other healthy plants \cite{Chan1994,VandenBosch1996,Zhang2012a}.
 	
    In the 1970s, the increase of computing power allowed the development of models capable of simulating vector population and weather conditions \cite{Gutierrez1974,Frazer1977,Kirtani1978,Irwin2000}. Despite their simple structure, these models enabled the integration of various disease control options, thus  creating a framework where such methods could be  analysed and evaluated. Madden et al. \cite{Madden2000} have performed a detailed analysis of the transitional dynamics of plant diseases taking into account the effects of vector emigration.  Depending on the way they are transmitted, plant viruses are classified as non-persistent, semi-persistent and persistent, and Madden et al. \cite{Madden2000} demonstrated which of these three classes were more susceptible to changes in vector longevity and inoculation, acquisition rates and vector mobility. Subsequent models have looked into the transmission dynamics of a pair of "helper" and helper-dependent viruses.  Zhang et al. \cite{Zhang2000} provided insights into the commonly observed phenomenon where infecting a host with only a helper virus would cause minimal or no damage to the host, whereas, additionally introducing the helper-dependent virus would produce far more devastating symptoms.

	In the last few decades it has been discovered that viruses employ a wide antigenic diversity as an effective strategy to survive within the host population \cite{Frank2002,Lipsitch2007}. By employing a variety of antigenic ally distinct strains, viruses are able to adapt sufficiently fast to evade the host's immune system. Antigenic variation is known to be effective for a large number of pathogens affecting humans, including malaria \cite{Gupta1994,Ferreira2004}, meningitis \cite{Gupta1996,Gupta1999}, dengue fever \cite{Gog2002} and influeza \cite{Ferguson2003}. The interactions between multiple strains are generally classified as either an ecological interference, or an immunological interference. The first type of interactions describes a simple case where individual hosts can only be infected with a single strain, and subsequently the are removed from the population susceptible to other strains \cite{Levin2004}.  Immunological interference corresponds to situations where infection with one strain may cause partial or full immunity to the remaining strains \cite{Gupta1999}, or sometimes it can even augment the susceptibility of the host and the transmissibility of other strains \cite{Recker2009}. To better understand the dynamics of multi-strain diseases, a large number of mathematical models have been developed that can be divided into individual-based and equation-based models. In individual-based models, all pathogen strains are treated as individuals interacting according to a fixed set of rules \cite{Ferguson2003,Buckee2004,Buckee2010,Cisternas2004}, whereas in equation-based models, hosts are categorised either according to preceding exposure to individual strains \cite{Andreasen1997,Gomes2002}, or based on their immunity to specific strains \cite{Gog2002,Kryazhimskiy2007}.
	
 	One very efficient way of protecting a plant against a disease known as {\it cross-protection}, consists of the process by which prior infection of the plant with a primary virus can prevent or interfere with the subsequent infection with a secondary virus of the same family \cite{Zhou2012}. In such a case, deliberately infecting the plant with a less virulent strain can offer protection against a much more virulent isolate of the virus. Although this natural phenomenon was first demonstrated more than 80 years ago, its precise mechanisms are still not fully understood, and several hypotheses have been put forward to explain how cross-protection works \cite{Pennazio2001}. It has been suggested that the primary infection could trigger the formation of specific antibodies which could prevent the subsequent infection by a similar virus. Another possibility is the coat-protein mediated resistance that is usually expressed by transgenic plants encoding viral coat-proteins. However, in the case of competing viral strains, the coat protein of the primary strain can also interfere with the encapsidation process of the secondary strain, thus rendering it ineffective for cell-to-cell transmission \cite{Beachy1999,Bendahmane1997}. Additionally, if the two viruses are closely related they could very well be competing for the same components which are essential for viral replication, or that the occupation of replication sites by the primary strain could cause a spatial exclusion of the secondary strain \cite{Lee2005,Takeshita2004,Gal-On2006}. 
 	
 	A very promising explanation of cross-protection can be found in the biological pathway known as a {\it post-transcriptional gene silencing}, or {\it RNA-interference} (RNAi) \cite{Ratcliff1999}. This mechanism is characterized by the ability of cells to recognise and degrade the messenger RNA of invading RNA viruses  or cause the  methylation of target gene sequences and the genome of DNA viruses \cite{Waterhouse1999,Escobar2000,Sijen2000}. This process is mediated by different lengths of double stranded RNAs (dsRNA) that are generated by an inverted-repeat transgene or an invading virus during its replication process. A very simple description of the core pathway is as follows. The presence of transgenic or viral dsRNA triggers an immune response within the host cell, whereby the foreign RNA is targeted by specialized enzymes called dicers (DLC) which cleave it into short 21-26 nucleotide long molecules. These molecules, named short interfering RNAs (siRNA) or microRNA(miRNA) can then be used to assemble a special protein complex called RNA-induced silencing complex (RISC) which has the capacity to recognise and degrade RNAs containing complementary sequences. By doing so, viral replication is prohibited, and, therefore, it prevents the spread of infection \cite{AlessandraTenorioCosta2013,Hammond2000,Bernstein2001}. It is very important to note that the siRNA can also be transported into neighbouring cells, thus acting as a mobile warning signal that can fortify and prepare cells by allowing them to express the antiviral components even before they become infected \cite{Zhang2012a,Wassenegger2000,Zhang2012b}.  
 	
 	The ability to induce a propagating warning signal can most likely be attributed to the evolutionary race between the plant and the viruses that afflict them, as it has been demonstrated that viruses can suppress different stages of the RNA-interference pathway \cite{AlessandraTenorioCosta2013,Pumplin2013,Raja2008}. In some cases the virus can prevent degradation of its genome by either suppressing cellular innate immune response  or by simply managing to successfully spread before  being detected. The latter can be achieved by  moving  into another cell before a specific threshold of viral dsRNA has accumulated, and one that is necessary in order for the cell to initiate a response. In other cases, the virus can only suppress the propagating warning signal, therefore, depending on which component of the immune response is targeted by viral suppressors, one can expect a different phenotype of recovery. 
 	
 	It is important to note that in the studies of plant pathology, single-host interactions between different viruses  are highly important as they can often produce distinct types of  host immune response. Therefore while some viral pairs are able to  facilitate each other and engage in a synergistic  relationship others will compete with each other for dominance \cite{Malapi-Nelson2009,Wege2007,Pruss1997}. Contrarily to cross-protection, enhanced symptom display  occurs  when plants co-infected with two or multiple viral strains experience symptoms that are more severe to the single-strain example and often exhibit an elevated viral load for one or multiple viruses. Therefore, depending on the level of competition between the viruses and the corresponding immune response a different degree of cross-protection or cross-enhancement can be observed.
 	
 	It is most unlikely that any synergistic or antagonistic outcome of a viral co-infection in a single host, associated with cross-protection or enhanced symptom display, can be fully explained by one single mechanism. This is due to the wide variety of plants with an immune system that is highly specific to the plant, and the fact that different viruses can often produce unique patterns of interactions \cite{Gal-On2006,Roossinck2005,Takeshita2005,Bergua2014}. However, if one takes different hypotheses into consideration, depending on the sequence homology of the two viruses and their specificity, one of them could inadvertently trigger an immune response or establish a set of host conditions that could either prevent the secondary infection from taking place or allow it to manifest more aggressively \cite{Malapi-Nelson2009,Reddy2012}. 
 	
 	Current mathematical models of plant virus epidemics with cross-protection have focussed primarily on the transmission dynamics  between populations of healthy plants and plants that are infected with one or multiple viral strains \cite{Zhang2000,Zhang2001,Jeger2011}. By studying the mechanisms of cross-protection on a cellular level, one might achieve a better understanding of the interactions between two viral strains and a single host. In this paper we derive and analyse a model of a plant disease within a single host with particular account for RNAi-mediated cross-protection.  We will show that the model can provide a good qualitative description of the plant's immune response to a viral co-infection, and that it provides a framework in which RNAi can account for both viral synergism and antagonism resulting in cross-protection. A potential application of the model lies in better understanding the efficacy of treating plants against viral diseases by means of the introduction of specific viral strains or genetically modified viruses.
	
	The outline of this paper is as follows. In the next section we describe in detail the main biological assumptions and derive a corresponding mathematical model of plant immune response. In Section 3 we identify all steady states of the model together with conditions for their biological feasibility and stability. Sections 4 is devoted to numerical stability analysis of these steady states, as well as numerical simulations of the model to illustrate different types of dynamical behaviour. The paper concludes with the discussion of results and open problems.
 	\section{Model derivation}
 	 	
 	To investigate the dynamics of biological interactions taking place during a co-infection of a plant with two viruses, we divide the total population of plant cells into the following compartments: healthy (or, susceptible) cells $S(t)$, populations $I_1(t)$ and $I_2(t)$ of cells infectious with virus 1 or virus 2, cells $W_1(t)$ and $W_2(t)$ that are immune to viruses 1 and 2, cells $H_1(t)$ and $H_2(t)$ that have recovered from a primary infection with one of the virus and are currently infectious with the other virus, and finally, the population of super-protected cells $W_{12}(t)$ that are immune to both viruses. Transitions between these different cell populations are illustrated in Fig.~\ref{fig:1}.
 	
 	For the sake of model simplicity, spatial components associated with host-specific anatomy will be neglected, and the cell populations are assumed to uniformly distributed within the plant. Despite potentially overlooking some aspects of the dynamics, the assumption of spatial uniformity has been very effectively used to understand viral dynamics \cite{per02,wod02}. Non-spatial models can provide significant insights into the dynamics and become the basis upon which more detailed models can be built on. Additionally, in the case of field plants, it is biologically reasonable to assume that multiple infection sites could be distributed all over the host. Targeted plants could be exposed multiple times during vector movement or feeding, as vector-borne pathogens have been found capable of even altering the phenotypes of their hosts and vectors in such a way that the frequency and the nature of interactions between them promotes the transmission of the disease \cite{Mauck2010,Moreno-Delafuente2013}. Furthermore, all plant cells are connected through plasmodesmata, the phloem and the xylem vessels responsible for resource translocation \cite{Lalonde2004}, and these pathways can also be used by viruses for systemic infections of their host \cite{Opalka1998,Wan2015}.
 	
 	Plant growth models can generally be divided into two classes: the ones where cell populations are allowed to exhibit unbounded growth, and the ones that assume a certain asymptotic final size due to finite resources or ontogenetic changes, like flowering of the plant. Asymptotic growth models are more favourable in the studies which consider the entire lifespan of the plant \cite{Paine2012,Heinen1999}.
Hence, we will describe plant growth by the logistic growth function with a linear growth factor $r$ and a carrying capacity $K$, with all cell populations contributing to the competition term, as has been effectively done in other models of immune response to infections, such as influenza \cite{tridane}, HIV \cite{PerNel} and HBV \cite{ciupe07}.
	
	Once a plant becomes infected, infected cell populations  $I_1(t)$ and $I_2(t)$ produce new infections by infecting susceptible (healthy) cells at rates $\lambda_1$ and $\lambda_2$, respectively. Due to various metabolic changes and loss of functions that occur after a viral takeover, the lifespan of infected cells is normally shorter than that of healthy cells, as characterised by higher death rates $\epsilon_1$ and $\epsilon_2$. Another possible explanation of a premature death of infected cells is given by the  hypersensitive response of the plant, where infected cells would be programmed to a premature death in order to avoid the spread of the infection and to isolate the infectious site \cite{Zvereva2012,Hinrichs1998,Fritig2007}.
 	
 	In this paper we will assume that a viral infection does not  always have a devastating effect on the cell, and hence it is possible for infected cells to recover before experiencing critical damage. Such recovered cells, denoted by $W_1(t)$ and $W_2(t)$,  will be considered immune to the corresponding viruses in a sense that they are no longer infectious. The recovery rates $\sigma_1$ and $\sigma_2$  represent cumulative effects of the two events mentioned above and represent the rates of transition from infected to warned compartments for each of the two viruses. As described in the Introduction, one of the core mechanisms of the plant immune system is the ability to spread a warning signal that is initiated from infectious sites to other parts of the plant and to protect neighbouring cells against the imminent virus infection.  For the sake of simplicity, the cells that have acquired immunity via this warning signal are also included in $W_1(t)$ and $W_2(t)$ populations. We assume that infected cells initiate and spread the warning signal to healthy cells at the rate $\delta_1$ and $\delta_2$, respectively. Cells that have been the recipients of the propagating signal for both viruses or have recovered from both a primary and a subsequent secondary infection will be represented by the super-protected population of cells $W_{12}(t)$ taken to be immune to both viruses. Thus, warned cells $W_1(t)$ and $W_2(t)$ will be recruited to the super-protected population $W_{12}(t)$ at modified warning rates $\gamma_2\delta_2$ and $\gamma_1\delta_1$, respectively. It is important to note that the resistance to the disease is almost always accompanied by a reduction of fitness normally represented by a reduced reproduction capability of cells \cite{Burdon2003,Tian2003}. In this model we assume no fitness cost in the traditional way, however, immune cells might also experience a shorter lifespan compared to susceptible cells, and, therefore, some fitness cost can be implemented by choosing the appropriate death rate $\epsilon_0$ for super-protected cells $W_{12}(t)$.
 	
 	The warned cells that have acquired immunity to a primary infection but have successfully been infected by a secondary infection will be denoted by $H_i(t)$, where the index $i=1,2$ signifies the current infectious state of the cell. Because of their acquired immunity to one of the viruses, these cells may be less or more resistant to the other virus. If the degree of homology between the two viruses is high, i.e the two viruses are closely immunologically related, it would imply that a cell which is immune or highly resistant to one of the viruses would express the same amount of resistance to both of viruses. On the other hand, if the two viruses are not related, it is reasonable to assume that expressing an antiviral resistance to one of the viruses could induce a susceptibility to a secondary non-related infection by reducing the efficacy of the immune response.
 	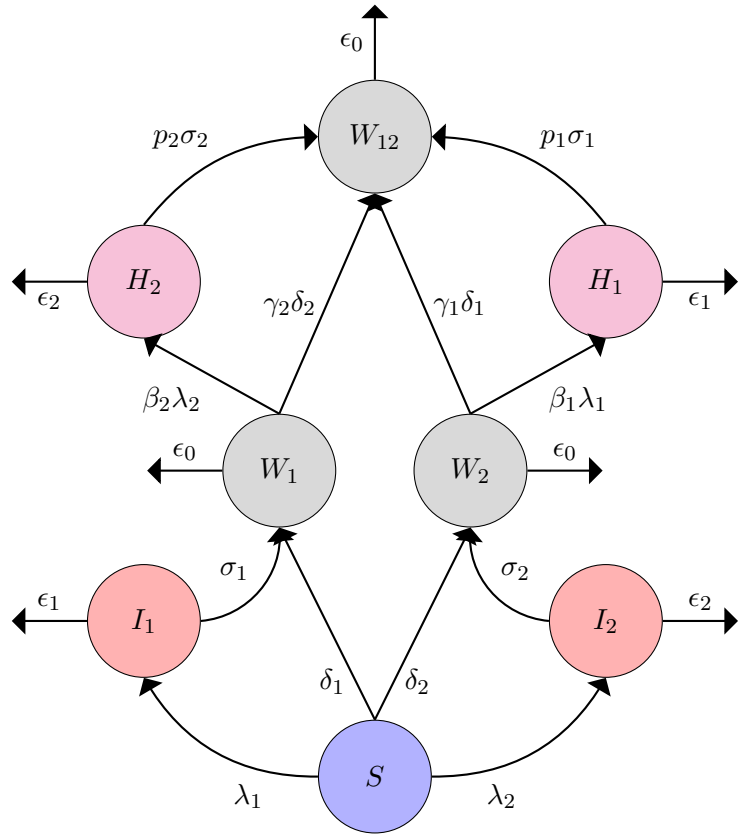
\begin{figure}
 		\centering
 		\begin{tikzpicture}
 		\node (start) [S] {$S$};
 		\node (I1) [I,  above left = 1cm and 2 cm of  start] {$I_1$};
 		\node (I2) [I,  above right =  1cm and 2 cm of start] {$I_2$};
 		\node (H1) [H, above = 3cm of  I2] {$H_1$};
 		\node (H2) [H, above = 3cm of  I1] {$H_2$};
 		\node (W1) [W, above left = 3cm and 0.2 cm of  start]  {$W_1$};
 		\node (W2) [W, above right = 3cm and 0.2 cm of  start] {$W_2$};
 		\node (W12) [W,above = 7 cm of  start] {$W_{12}$};
 		\coordinate[left = 1cm of I1] (LI1);
 		\coordinate[right = 1cm of I2] (RI2);
 		\coordinate[left = 1cm of W1] (LW1);
 		\coordinate[right = 1cm of W2] (RW2);
 		\coordinate[left = 1cm of H2] (LH2);
 		\coordinate[right = 1cm of H1] (RH1);
 		\coordinate[above = 1cm of  W12] (AW12);
 		\draw [arrow,bend left =30] (start.west) to  node[auto,below left, pos = 0.5,pos = 0.2 ]{$\lambda_1$} (I1.south);
 		\draw [arrow,bend right =30] (start.east) to  node[auto, below right, pos = 0.5, pos = 0.2 ]{$\lambda_2$} (I2.south);
 		\draw [arrow] (start.north) to  node[auto,below, pos = 0.2,left] {$\delta_1$} (W1.south);
 		\draw [arrow] (start.north) to  node[auto,below, pos = 0.2, right] {$\delta_2$} (W2.south);
 		\draw [arrow] (W1.north) -- node[anchor=south,left] {$\gamma_2\delta_2$} (W12.south);
 		\draw [arrow] (W2.north) -- node[anchor=south,right] {$\gamma_1\delta_1$} (W12.south);
 		\draw [arrow] (W1.north) to node[auto] {$\beta_2\lambda_2$} (H2.south);
 		\draw [arrow] (W2.north) to node[auto,swap] {$\beta_1\lambda_1$} (H1.south);
 		\draw [arrow, bend right = 45] (I1.east) to  node[auto]{$\sigma_1$} (W1.south);
 		\draw [arrow, bend left = 45] (I2.west) to  node[auto,swap]{$\sigma_2$} (W2.south);
 		\draw [arrow,bend left = 25] (H2.north) to node[auto]{$p_2\sigma_2$}(W12.west);
 		\draw [arrow,bend right = 25] (H1.north) to node[auto,swap]{$p_1\sigma_1$}(W12.east);
 		\draw[arrow](I1.west) to node[swap,auto]{$\epsilon_1$}(LI1);
 		\draw[arrow](I2.east) to node[swap,above]{$\epsilon_2$}(RI2);
 		\draw[arrow](W1.west) to node[swap,above]{$\epsilon_0$}(LW1);
 		\draw[arrow](W2.east) to node[swap,above]{$\epsilon_0$}(RW2);
 		\draw[arrow](W12.north) to node [auto]{$\epsilon_0$}(AW12);
 		\draw[arrow](H2.west) to node[auto]{$\epsilon_2$}(LH2);
 		\draw[arrow](H1.east) to node [auto,swap]{$\epsilon_1$}(RH1);
 		\end{tikzpicture}
 		\caption{A diagram of interactions between two competing viruses and the corresponding plant immune response. Here $S$ denotes the susceptible cells, $I_{1,2}$ and $W_{1,2}$ are the infected and the warned cells for each virus, respectively. Warned cells subsequently infected by a primary or secondary virus are denoted by $H_{1}$ and $H_{2}$. Finally, $W_{12}$ denotes the super-protected cells immune to both viruses. The arrows indicate the rates of transitions from one category of cells to another. } 
 		\label{fig:1}
 	\end{figure}
 	
 	From a biological perspective there could be a limited number of components in the cell that can be used to mount an immune response against a viral infection. For example, unless a cell is warned by both propagating signals, it might be the case that all components able to form antiviral complexes within the cell are being used to prepare only for a single infection, or that there might not be enough components in general to mount a sufficient immune response to both infections simultaneously. Moreover, chemical changes within the cell introduced during the primary infection and the corresponding immune response could potentially provide more favourable conditions in which the secondary infection is established more easily. In light of these observations, the infectious cells $H_1(t)$ and $H_2(t)$ will infect other cells at the modified infection rates $a_1\lambda_1$ and $a_2\lambda_2$  to account for either enhanced ($a_{1,2}>1$) or reduced ($a_{1,2}<1$) viral transmissibility. Similarly, we introduce the susceptibility modifiers $\beta_{1}$ and $\beta_2$  for the warned cells $W_2(t )$ and $W_1(t)$, respectively, which will be assumed to be either susceptible ($\beta_{1,2}>1$) or resistant ($\beta_{1,2}<1$) to the virus agains which they have not yet acquired immunity. To account for a prior infection, the recovery rates of cells $H_i$ are modified by the factors $p_i$, so these cells are recruited into the super-protected population at rates $p_1\sigma_1$ and $p_2\sigma_2$, respectively. Therefore, in this model  the parameters that define viral cooperation will be the modifiers i.e $a_i,\;\beta_i$ and $p_i$ which can be interpreted  as either functions of the antigenic distance or  other specific relation between two viruses. For simplicity, we will ignore  the possibility of random mutations, so that these modifiers will remain constant.
 	
 	Under the above assumptions, the model describing the dynamics of plant immune response to two viral infections can be written as follows,
 	\begin{equation}\label{sys:original}
 	\begin{array}{l}
 	\displaystyle{\frac{dS}{dt}  = r\widehat{S}\left(1-\frac{N}{K} \right)-S\left[(\lambda_1 + \delta_1) I_1 + (\lambda_2 + \delta_2) I_2 + a_2{\lambda}_2H_{2}+a_1{\lambda}_1H_{1}\right],}\\\\
 	\displaystyle{\frac{dI_1}{dt} = I_1(\lambda_1S - \sigma_1 - \epsilon_1)  + a_1{\lambda}_1H_{1}S,}\\\\
 	\displaystyle{\frac{dI_2}{dt} = I_2(\lambda_2S - \sigma_2 - \epsilon_2)  + a_2{\lambda}_2H_{2}S,}\\\\
 	\displaystyle{\frac{d W_1}{d t} = I_1(\sigma_1 + \delta_1S) -W_1\left[\epsilon_0 + (\beta_2 \lambda_2 + \gamma_2\delta_2) I_2 + \beta_2a_2\lambda_2H_{2}\right],}\\\\
 	\displaystyle{\frac{d W_2}{d t} =I_2(\sigma_2 + \delta_2S) - W_2\left[\epsilon_0 + (\beta_1 \lambda_1 + \gamma_1\delta_1) I_1 + \beta_1a_1\lambda_1H_{1}\right],}\\\\
 	\displaystyle{\frac{dH_{1}}{dt} =W_2({\beta}_1{\lambda}_1I_1+\beta_1a_1{\lambda}_1H_{1})- H_{1}(\epsilon_1 + p_1\sigma_1),}\\\\
 	\displaystyle{\frac{dH_{2}}{dt} = W_1({\beta}_2{\lambda}_2I_2+\beta_2a_2{\lambda}_2H_{2})- H_{2}(\epsilon_2 + p_2\sigma_2),}\\\\
 	\displaystyle{\frac{dW_{12}}{dt} =p_1\sigma_1 H_{1} + p_2\sigma_2H_{2} + \gamma_2\delta_2I_2W_1 + \gamma_1\delta_1I_1W_2 - \epsilon_0W_{12},}
 	\end{array}
 	\end{equation}
 	where $\widehat{S}(t)= S(t)+W_1(t)+W_2(t)+W_{12}(t)$, and $N(t)=S(t)+I_1(t)+I_2(t)+W_1(t)+W_2(t)+H_1(t)+H_2(t)+W_{12}(t)$ is the total population of plant cells. As a first step of the analysis, we establish well-posedness of the system (\ref{sys:original}).
 	
 	
 	\begin{theorem}
 		The model (\ref{sys:original}) with initial conditions
			\[
			\begin{array}{l}
			S(0)>0,\quad I_1(0)\geq 0, \quad I_2(0)\geq 0, \quad W_1(0)\geq 0, \quad W_2(0)\geq 0,\\\\
			H_1(0)\geq 0, \quad H_2(0)\geq 0,\quad W_{12}(0)\geq 0,
			\end{array}
			\]
			and $N(0)=N_0 <K$ is well-posed, i.e. its solutions remain non-negative and bounded for all $t\geq 0$.
 	\end{theorem}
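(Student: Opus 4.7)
The right-hand side of (\ref{sys:original}) is polynomial in the eight state variables and hence locally Lipschitz on $\mathbb{R}^8$, so the Picard--Lindel\"of theorem yields a unique solution on some maximal interval $[0,T_{\max})$. The plan is to show that the closed region
\[
\Omega=\bigl\{(S,I_1,I_2,W_1,W_2,H_1,H_2,W_{12})\in\mathbb{R}^8_{\geq 0}\,:\,N\leq K\bigr\}
\]
is positively invariant under the flow, which will simultaneously deliver non-negativity of every compartment and the uniform bound $N(t)\leq K$; since trajectories confined to this compact set cannot blow up in finite time, this also forces $T_{\max}=\infty$.

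For non-negativity I would examine the vector field on each coordinate face $\{x_i=0\}$. In every equation of (\ref{sys:original}) apart from the $S$-equation, the negative terms contain the variable itself as a factor, so setting $x_i=0$ leaves only non-negative expressions in the other coordinates: for instance $\dot I_1|_{I_1=0}=a_1\lambda_1 H_1 S$, $\dot W_1|_{W_1=0}=I_1(\sigma_1+\delta_1 S)$, $\dot H_1|_{H_1=0}=\beta_1\lambda_1 W_2 I_1$, and $\dot W_{12}|_{W_{12}=0}=p_1\sigma_1 H_1+p_2\sigma_2 H_2+\gamma_2\delta_2 I_2 W_1+\gamma_1\delta_1 I_1 W_2$, all of which are $\geq 0$ as soon as the remaining coordinates are. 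On the face $\{S=0\}$ we have $\dot S=r\widehat{S}(1-N/K)$, which is non-negative only under the extra condition $N\leq K$.

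The bound $N\leq K$ is extracted by summing all eight equations of (\ref{sys:original}). Every inter-compartmental transfer (the infection terms $\lambda_i S I_i$ and $a_i\lambda_i S H_i$, the warning terms $\delta_i S I_i$ and $\gamma_j\delta_j I_j W_i$, the recovery terms $\sigma_i I_i$ and $p_i\sigma_i H_i$, and the secondary-infection terms $\beta_i\lambda_i W_j I_i$ and $\beta_i a_i\lambda_i W_j H_i$) appears once with a plus and once with a minus sign and therefore cancels, leaving only the logistic input and the natural death terms:
\[
\frac{dN}{dt}=r\widehat{S}\Bigl(1-\frac{N}{K}\Bigr)-\epsilon_1(I_1+H_1)-\epsilon_2(I_2+H_2)-\epsilon_0(W_1+W_2+W_{12}).
\]
Using $\widehat{S}\leq N$ and, once non-negativity is in force, that the death terms are non-positive, this collapses to the scalar differential inequality $\dot N\leq rN(1-N/K)$; a standard comparison with the logistic equation together with the hypothesis $N_0<K$ then yields $N(t)\leq K$ throughout the maximal interval.

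The subtle point, and the step I expect to require the most care, is the mild circularity between the two estimates: the argument for $\dot S\geq 0$ on $\{S=0\}$ uses $N\leq K$, while the proof of $N\leq K$ in turn uses non-negativity of the other components. I would close the loop by a continuity/bootstrap argument. Set $T^\star=\sup\{t\in[0,T_{\max}):\text{the solution belongs to }\Omega\text{ on }[0,t]\}$, which is positive by continuity and the strict initial conditions $S(0)>0$, $N_0<K$. If $T^\star<T_{\max}$, the trajectory would lie on $\partial\Omega$ at $t=T^\star$; but the inward-pointing checks above, together with $\dot N|_{N=K}=-\epsilon_1(I_1+H_1)-\epsilon_2(I_2+H_2)-\epsilon_0(W_1+W_2+W_{12})\leq 0$ on the face $\{N=K\}$, forbid the solution from exiting $\Omega$, contradicting the definition of $T^\star$. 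Hence $T^\star=T_{\max}=\infty$ and the theorem follows.
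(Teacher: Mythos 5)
Your proposal is correct and follows the same overall strategy as the paper's proof: non-negativity via sign checks of the vector field on the coordinate faces, the bound $N\leq K$ via the summed equation, and a continuity/bootstrap step to untangle the interdependence of the two estimates. The one genuine difference is how the bound $N\leq K$ is closed. The paper evaluates $\dot N$ at the first hitting time $T_2$ with $N(T_2)=K$ and derives $\dot N(T_2)<0$; this fails in the degenerate case where every compartment except $S$ vanishes at $T_2$ (there $\dot N(T_2)=0$), so the paper must treat that case separately via uniqueness of ODE solutions, under which the system collapses to the logistic equation for $S$. Your comparison argument $\dot N\leq rN(1-N/K)$ together with $N_0<K$ gives the strict bound $N(t)\leq N_{\mathrm{log}}(t)<K$ on the whole interval where the solution stays in $\Omega$, so the face $\{N=K\}$ is never reached and the degenerate case never arises; that is a cleaner way to handle this step, at the modest price of invoking a comparison theorem. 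One caveat: your closing claim that the non-strict inward-pointing conditions on the coordinate faces ``forbid the solution from exiting $\Omega$'' is not a complete argument as stated, since mere tangency of a Lipschitz field to a face does not by itself prevent exit; you need either the standard quasipositivity criterion for invariance of the non-negative orthant or an integrating-factor (variation-of-constants) representation of each component. The paper's one-line treatment of non-negativity has exactly the same gap, so this is a matter of polish rather than a defect relative to the published argument; you also supply the local existence and global continuation steps that the paper leaves implicit.
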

 	
 	\begin{proof}
 		Let $T_2$ be a period of time, such that $N(t)<K$ for $t\in[0,T_2]$, and suppose $T_1\le T_2$ is the first time such that $S(T_1) = 0$. This implies that
 			\begin{equation*}
 			\dot{S}(T_1) = r(W_1 + W_2 +W_{12}) [1- (W_1 + W_2 +W_{12} + I_1 + I_2)/K]\ge 0, 
 			\end{equation*}
 			hence,  for any $0\le t\le T_2$, we have that $S(t)\ge 0$. For the remaining variables, considering any positive time $t$, if for any $i=1,2$ we have that $I_i(t) = 0$, this implies that $\dot{I_i}(t)=a_i{\lambda}_iH_{i}\ge 0$, thus $I_i(t) $ must be non-negative for all times. Likewise, for both $W_i(t) = 0$ we obtain $\dot{W_i}(t)=I_i(\sigma_i + \delta_iS)\ge 0$ which shows that $W_i(t)\ge 0$. If $H_i(t) = 0$, we have $\dot{H_i}(t)=W_j{\beta}_i{\lambda}_iI_i\ge 0$ with $1\le i\ne j\le 2$. Finally, for $W_{12}(t) = 0$, we have that $\dot{W}_{12}(t)\ge 0 $. Thus, all variables remain non-negative for  $t\in[0,T_2]$.
 		
 		We now prove, by contradiction, that, in fact, $N(t)<K$ for all $t\ge0 $. Assume, for a contradiction, that there is a first time $T_2>0$ at which the inequality $N(t)<K$ ceases to hold. Since $T_2$ is the first such time, $N(T_2)=K$ and $\dot{N}(T_2) \ge 0 $. As has been shown earlier, all state variables are non-negative at $t=T_2$. Adding up all equations of the system (\ref{sys:original}) yields
 		\begin{equation}
 		\frac{dN}{dt} = r\hat{S}(1-N/K)-\epsilon_1 I_1 - \epsilon_2 I_2 - \epsilon_0 (W_1+W_2) -\epsilon_1 H_1 - \epsilon_2 H_2 - \epsilon_0 W_{12},
 		\end{equation}
 		Since at $t=T_2$ we have that $N(T_2)=K$, the last equation gives $\dot{N}(T_2)<0$, which is a contradiction, unless $I_1(T_2)=I_2(T_2)=W_1(T_2)=W_2(T_2)=H_1(T_2)=H_2(T_2)=W_{12}(T_2)=0$. But in this exceptional case, the initial value theorem for ODEs, applied to the last 7 equations of system (\ref{sys:original}) with $S$ considered as a prescribed function, yields that 
 			$I_1(t)=I_2(t)=W_1(t)=W_2(t)=H_1(t)=H_2(t)=W_{12}(t)=0$ for all $t>T_2$ and the equation for $S(t)$ (the first equation of the system) reduces to the logistic equation $\dot{S} = rS(1-S/K)$. Thus, for any $t\geq T_2$, we have $0<S(t)\leq K$, which completes the proof.
 	\end{proof}
 	
 	To simplify the model and reduce the number of free parameters, we non-dimensionalise the system (\ref{sys:original}) by introducing new dimensionless variables
 	\[
 	\begin{array}{l}
 	\displaystyle{\tau=rt,\quad u_1=\frac{S}{K},\quad u_2=\frac{I_1}{K}, \quad u_3=\frac{I_2}{K},\quad u_4=\frac{W_1}{K},}\\\\
 	\displaystyle{u_5=\frac{W_2}{K}, \quad u_6=\frac{H_1}{K}, \quad u_7=\frac{H_2}{K}, \quad u_8=\frac{W_{12}}{K},}
 	\end{array}
 	\]
 	and for $i= 1,2$, parameters
 	\[
 	L_i = \frac{\lambda_i}{r},\quad d_i = \frac{K\delta_i}{r},\quad e_i = \frac{\epsilon_i}{r},\quad s_i = \frac{\sigma_i}{r}, \qquad e_0 = \frac{\epsilon_0}{r}.
 	\]
 
 	This gives the following modified system
 	\begin{equation}\label{sys:NDS}
 	\begin{array}{l}
 	\displaystyle{\frac{du_1}{d\tau} = \widehat{u}_1(1-\widehat{N}) -u_{{1}} \left[ \left( L_{{1}}+d_
 		{{1}} \right) u_{{2}}+ \left( L_{{2}}+d_{{2}} \right) u_{{3}}+a_{{1}}L
 		_{{1}}u_{{6}}+a_{{2}}L_{{2}}u_{{7}} \right],}\\\\
 	\displaystyle{\frac{du_2}{d\tau}=L_{{1}} \left( a_{{1}}u_{{6}}+u_{{2}} \right) u_{{1}}-u_{{2}} \left( e
 		_{{1}}+s_{{1}} \right),}\\\\
 	\displaystyle{\frac{du_3}{d\tau}= L_{{2}} \left( a_{{2}}u_{{7}}+u_{{3}} \right) u_{{1}}-u_{{3}} \left( e
 		_{{2}}+s_{{2}} \right),}\\\\
 	\displaystyle{\frac{du_4}{d\tau}=u_{{2}} \left( d_{{1}}u_{{1}}+s_{{1}} \right) -u_{{4}} \left[  \left( 
 		\beta_{{2}}L_{{2}}+\gamma_{{2}}d_{{2}} \right) u_{{3}}+\beta_{{2}}a_{{
 				2}}L_{{2}}u_{{7}}+e_{{0}} \right],}\\\\
 	\displaystyle{\frac{du_5}{d\tau}=u_{{3}} \left( d_{{2}}u_{{1}}+s_{{2}} \right) -u_{{5}} \left[\left( 
 		\beta_{{1}}L_{{1}}+\gamma_{{1}} d_{{1}} \right) u_{{2}}+\beta_{{1}}a_{{
 				1}}L_{{1}}u_{{6}}+e_{{0}} \right],}\\\\
 	\displaystyle{\frac{du_6}{d\tau}=\beta_{{1}}L_{{1}} \left( a_{{1}}u_{{6}}+u_{{2}} \right) u_{{5}}-u_{{6
 			}} \left( p_{{1}}s_{{1}}+e_{{1}} \right),}\\\\
 		\displaystyle{\frac{du_7}{d\tau}=\beta_{{2}}L_{{2}} \left( a_{{2}}u_{{7}}+u_{{3}} \right) u_{{4}}-u_{{7
 				}} \left( p_{{2}}s_{{2}}+e_{{2}} \right),}\\\\ 
 			\displaystyle{\frac{du_8}{d\tau} =\gamma_{{1}}d_{{1}}u_{{2}}u_{{5}}+\gamma_{{2}}d_{{2}}u_{{3}}u_{{4}}+p_{{1}}s_{{1}}u_{{6}}+p_{{2}}s_{{2}}u_{{7}}-e_{{0}}u_{{8}},}
 			\end{array}
 			\end{equation}
 			where $\widehat{u}_1 = u_1 +u_4 + u_5 + u_8$ and $\widehat{N} = \widehat{u}_1+ u_2 + u_3 +u_6 + u_7$.
 			\section{Steady states}
 			It is straightforward to see that independently of the values of parameters, the system (\ref{sys:NDS}) always admits a trivial steady state
 			\begin{equation}
 			E_0=(0,0,0,0,0,0,0,0),
 			\end{equation}
 			and a disease-free steady state given by
 			\begin{equation}
 			E_{DF}=(1,0,0,0,0,0,0,0).
 			\end{equation}
 			
 			Looking for steady states of the system (\ref{sys:NDS}) $u_2 = 0$ and $u_{1,3} \ne 0$, gives $u_4 = u_6 =u_7 =u_8 = 0$. Substituting these values in other equations of system (\ref{sys:NDS}) gives a one-virus endemic steady state
 			\begin{equation}\label{E2eq}
 			E_2 =(u_1^*,0,u_3^*,0,u_5^*,0,0),
 			\end{equation}
 			where
 			\[
 			u_1^*=\frac{e_2+s_2}{L_2},\quad u_3^*=\frac{-c_1(u_1^*)-\sqrt{c_1^2(u_1^*)-4c_2(u_1^*)c_0(u_1^*)}}{2c_2(u_1^*)},\quad
 			u_5^*= A(u_1^*)u_3^*,
 			\]
 			with
 			\[
 			\begin{array}{l}
 			\displaystyle{A(u_1^*)=\frac {d_2 u_1^*+s_{2}}{e_0},\quad B=L_2+d_2,\quad c_0(u_1^*)=u_1^*(1-u_1^*),}\\\\
 			\displaystyle{c_1(u_1^*)=A(u_1^*)-u_1^*[2A(u_1^*)+B+1],\quad c_2(u_1^*)=-A(u_1^*)[A(u_1^*)+1].}
 			\end{array}
 			\]
 			The steady state $E_2$ is biologically feasible, as long as the condition $e_2 +s_2<L_2$ holds.
 			
 			Proceeding in a similar manner, one can find a one-virus endemic steady state $E_1$ corresponding to the presence of virus 1 only. This steady state is explicitly given by
 			\begin{equation}
 			E_1=(\widetilde{u}_1^*,u_2^*,0,u_4^*,0,0,0),
 			\end{equation}
 			where now
 			\[
 			\widetilde{u}_1^*=\frac{e_1+s_1}{L_1},\quad u_2^*=\frac{-\widetilde{c}_1(\widetilde{u}_1^*)-\sqrt{\widetilde{c}_1^2(\widetilde{u}_1^*)-4\widetilde{c}_2(\widetilde{u}_1^*)\widetilde{c}_0(u_1^*)}}{2\widetilde{c}_2(u_1^*)},\quad
 			u_4^*= \widetilde{A}(\widetilde{u}_1^*)u_2^*,
 			\]
 			with
 			\[
 			\begin{array}{l}
 			\displaystyle{\widetilde{A}(\widetilde{u}_1^*)=\frac {d_1 \widetilde{u}_1^*+s_{1}}{e_0},\quad \widetilde{B}=L_1+d_1,\quad \widetilde{c}_0(\widetilde{u}_1^*)=\widetilde{u}_1^*(1-\widetilde{u}_1^*),}\\\\
 			\displaystyle{\widetilde{c}_1(\widetilde{u}_1^*)=\widetilde{A}(\widetilde{u}_1^*)-\widetilde{u}_1^*[2\widetilde{A}(\widetilde{u}_1^*)+\widetilde{B}+1],\quad \widetilde{c}_2(\widetilde{u}_1^*)=-\widetilde{A}(\widetilde{u}_1^*)[\widetilde{A}(\widetilde{u}_1^*)+1].}
 			\end{array}
 			\]
 			This steady state is biologically feasible whenever the condition $e_1+s_1<L_1$ is satisfied.
 			
 			Besides the disease-free and the two one-virus endemic steady states, system (\ref{sys:NDS}) can support one or more {\it syndemic} steady states characterised by the simultaneous presence of both viruses,
 			\begin{equation}
 			S= (u_1^*,u_2^*,u_3^*,u_4^*,u_5^*,u_6^*,u_7^*,u_8^*).
 			\end{equation}
 			To find this steady state, let us introduce auxiliary variables and functions
 			\begin{equation}\label{u0def}
 			\begin{array}{l}
 			\displaystyle{u_0= \min_{i=1,2}{\left(\frac{e_i + s_i}{L_i}\right)},\quad F_i(x) = -\frac{\left( L_i x-e_i-s_i\right)}{L_ia_i x},\hspace{0.3cm}i=1,2,}\\\\
 			\displaystyle{\Delta_i(x)=\beta_iL_i(F_i(x)a_i+1)+d_i\gamma_i,\quad G_i(x)= d_i x+s_i,\hspace{0.3cm}i=1,2,}
 			\end{array}
 			\end{equation}
 			which allow us to express all steady state variables through $u_1^*$ in the following way:
 			\[
 			\begin{array}{l}
 			\displaystyle{u_4^*= {\frac {F_2\left(u_1^*\right)\left(p_2s_2+e_2\right)}{\beta_2L_2\left[a_2F_2\left(u_1^*\right) +1 \right] }},\quad
 				u_5^*=\frac {F_1\left(u_1^*\right)\left(p_1s_1+e_1\right)}{\beta_1L_1\left[a_1F_1\left(u_1^*\right)+1\right]},}\\\\
 			\displaystyle{u_2^*= \frac{e_0 u_4^*\left[\Delta_2\left(u_1^*\right) u_5^*+G_2\left(u_1^*\right)\right]}{G_1\left(u_1^*\right)G_2\left(u_1^*\right)-\Delta_1\left(u_1^*\right)\Delta_2\left(u_1^*\right) u_4^*u_5^*},\hspace{0.3cm}
 				u_3^*=\frac{e_0 u_5^*\left[\Delta_1\left(u_1^*\right)u_4^*+G_1\left(u_1^*\right)\right]}{G_1\left(u_1^*\right)G_2\left(u_1^*\right)-\Delta_1\left(u_1^*\right)\Delta_2\left(u_1^*\right)u_4^*u_5^*},}\\\\
 			u_6^*= u_2^*F_1(u_1^*),\quad u_7^*= u_3^* F_2(u_1^*),\\\\
 			\displaystyle{u_8^*= \frac{d_1\gamma_1u_2^*u_5^*+d_2\gamma_2u_3^*u_4^*+p_1s_1u_6^*+p_2s_2u_7^*}{e_0}.}
 			\end{array}
 			\]
 			Substituting these expressions into the equation
 			\[
 			\hat{u}_1^*(1-\hat{N}) -u_1^*\left[ \left(L_1+d_1\right)u_2^*+\left(L_2+d_2\right)u_3^*+a_1L_1u_6^*+a_2L_2u_7^*\right]=0,
 			\]
 			yields a polynomial equation for $u_1^*$, whose roots gives possible candidates for the syndemic steady state. This state is biologically feasible if
 			\[
 			0<u_1^*<u_0,\quad G_1\left(u_1^*\right)G_2\left(u_1^*\right)-\Delta_1\left(u_1^*\right)\Delta_2\left(u_1^*\right)u_4^*u_5^*>0.
 			\]
 			
 			Linearising  system (\ref{sys:NDS}) near the trivial steady state $E_0$ gives the following characteristic equation for eigenvalues $\mu$:
 			\[
 			\left( \mu-1\right)\left(\mu+e_0\right)^3\prod\limits_{i=1}^2(\mu+e_i+s_i)(\mu+p_is_i+e_i)=0.
 			\]
 			Since one of the roots is $\mu = 1$, this implies that the trivial steady state is always unstable, and, therefore, it is impossible for all cell populations to die out. Linearisation near the disease-free steady state $E_{DF}$ has a characteristic equation
 			\begin{equation}\label{eq:chp_Disease_free}
 			\left( \mu+1\right)\left( \mu+e_0\right) ^3\prod\limits_{i=1}^2\left( p_is_i+\mu+e_i\right)\left(\mu-L_i+e_i+s_i\right)=0,
 			\end{equation}
 			implying that the disease-free steady state $E_{DF}$ is linearly asymptotically stable, provided $u_0>1$, with $u_0$ defined in (\ref{u0def}). In epidemiology, one of the most common and efficient techniques for establishing criteria for onset of epidemic outbreaks is analysis of the {\it basic reproduction number} $R_0$, defined as the average number of secondary infections produced by a single infected individual in a totally susceptible population \cite{Hethcote2000,Dietz1993,Driessche2008,Heesterbeek1996}. This quantity can be derived in a number of ways, e.g. using the next generation approach \cite{Driessche2008}, we define the basic reproduction number for each of the viruses as follows
 			 \begin{equation}\label{Def:R0}
 			R_{01} = \frac{L_1}{e_1 + s_1} \qquad R_{02} = \frac{L_2}{e_2 + s_2},	
 			\end{equation}
 			and denote $\displaystyle{R_0 =\max\left\{R_{01},R_{02}\right\}}={u_0}^{-1}$. Then, the disease-free steady state $E_{DF}$ is linearly asymptotically stable if $R_0<1$. This result means that a complete recovery from both viral infections depends on the efficacy of RNA interference from local induction, i.e the ability of the host cell to target and degrade viral RNA in order to inhibit viral multiplication, and also on whether infected cells reach their limited lifespan faster than they can spread the disease for each virus, respectively. Furthermore, since the basic reproduction number $R_0$ does not depend on the transmissibility ($a_{1,2}$) or susceptibility ($\beta_{1,2}$) modifiers, this implies that the interactions between the two viruses during the host co-infection cannot cause both viruses to become extinct. On the other hand, the modifiers may determine whether both viruses, or only one of them will survive.
 			
 			Characteristic equation of linearisation near the endemic steady state $E_2$ can be factorised into 
 			\begin{equation}\label{eq:chp_E2}
 			X_1(\mu) X_2(\mu) X_3 (\mu)=0,
 			\end{equation}
 			where
 			\[
 			\begin{array}{l}
 			\displaystyle{X_1(\mu) = \left(\mu+e_0 \right)\left( p_2s_2+\mu+e_2\right)\left[u_3^*\left(L_2\beta_2+d_2\gamma_2\right)+\mu+e_0\right],}\\\\
 			\displaystyle{X_2(\mu) = \mu^2 + x_{21}\mu + x_{20},\qquad X_3(\mu) = \mu^3 + x_{32}\mu^2 + x_{31}\mu + x_{30},}
 			\end{array}
 			\]
 			and
 			\begin{equation}\label{x21eq}
 			\begin{array}{l}
 			x_{21} = s_1(p_1+ 1 )+2e_1-L_1(a_1\beta_1u_5^* + u_1^*),\\\\
 			x_{20} = (p_1s_1 +e_1) (e_1 +s_1-L_1u_1^*) - L_1a_1\beta_1(e_1 + s_1)u_5^*,\\\\
 			x_{32} = 2u_1^*+\left( L_2+d_2+1\right)u_3^*+2u_5^*+e_0-1,\\\\
 			x_{31} = d_1(u_3^*)^2 + \left[(L_2 + d_2)[u_1^*(L_2+1)+u_5+e_0]+d_2(u_1^*+u_5^*-1)+ e_0\right]u_3^*\\
 			\qquad+ e_0(2u_1^*+2u_5^*-1),\\\\
 			x_{30} = L_2u_3^*\left[d_2u_1^*\left(2(u_1^*+u_5^*)+u_3^*+e_0-1\right)+u_1^*e_0\left(L_2+1\right)+s_2\left(2u_1^*+u_3^*-1\right)\right] \\
 			\qquad+L_2u_3^*+u_5^*\left(e_0+2s_2\right).
 			\end{array} 
 			\end{equation}
 			Since all system parameters are strictly positive, the roots of $X_1(\mu)$ are all real and negative. By the Routh-Hurwitz criterion we have that all roots of $X_2(\mu)$ lie in the left complex half-plane if the coefficients $x_{21}$ and $x_{20}$ are positive, which translates into the requirements
 			\begin{equation}
 			\begin{aligned}[alignment]
 			&u_5^*<\frac{(s_1p_1 +e_1)+(s_1 + e_1 -L_1u_1^*)}{L_1a_1\beta_1}:=u_A,\quad \mbox{and}\\
 			&u_5^*<\frac{(s_1p_1 +e_1)(s_1 + e_1 -L_1u_1^*)}{L_1a_1\beta_1(e_1 + s_1)}:=u_B.
 			\end{aligned}
 			\end{equation}
 			Since $u_5^*$ must be positive, we require that $u_1^*< (s_1 + e_1)/L_1$. Additionally, since $s_1 + e_1 -L_1u_1^*<s_1 + e_1$, we have
 			\begin{equation}\label{eq:u_B}
 			\begin{aligned}
 			u_A =& \frac{s_1p_1 +e_1}{L_1a_1\beta_1} + \frac{s_1 + e_1 -L_1u_1^*}{L_1a_1\beta_1}>\frac{s_1p_1 +e_1}{L_1a_1\beta_1},\\
 			u_B =& \frac{s_1p_1 +e_1}{L_1a_1\beta_1}\frac{s_1 + e_1 -L_1u_1^*}{e_1 + s_1}<\frac{s_1p_1 +e_1}{L_1a_1\beta_1},
 			\end{aligned}
 			\end{equation}
 			implying $u_B<u_A$. Hence, the roots of $X_2(\mu)$ have a negative real part, provided
 			\[
 			u_1^*<\frac{s_1 + e_1}{L_1} =\widetilde{u}_1^* \quad\mbox{and}\quad u_5^*<u_B.
 			\]
 			This also implies that a necessary condition for the stability of the endemic steady state $E_2$ is the intuitively natural result that the two basic reproduction numbers defined in (\ref{Def:R0}) must satisfy $R_{02}>R_{01}$.
 		
 			\begin{table}
 				\begin{tabular}{|  p{.20\linewidth}|   p{.52\linewidth}| p{.18\linewidth}|}
 					\hline
 						Dimensionless Parameters & Biological meaning & Baseline value\\
 					\hline 
 					$\quad L_{1,2}$  & Infection rate & 1.5 \\
 					$\quad s_{1,2}$  & Recovery rate & 0.5\\
 					$\quad d_{1,2}$  & Propagation rate & 0.05\\
 					$\quad a_{1,2}$     & Transmissibility modifier (after secondary infection) & 1  \\
 					$\quad\beta_{1,2}$   & Susceptibility modifier (after primary infection) & 1 \\
 					$\quad \gamma_{1,2}$   & Acquired secondary immunity modifier & 0.5 \\
 					$\quad e_0$        & Natural death rate  & 0.3\\
 					$\quad e_{1,2}$ & Infected cell death rate & 0.6 \\
 					$\quad p_{1,2}$   & Recovery modifier& 0.2 \\
 					\hline
 				\end{tabular}
 				\caption{Table of  baseline values of parameters in system (\ref{sys:NDS}).}
 				\label{tab:List of parameters}
 			\end{table}

 			Applying the Routh-Hurwitz criterion to the cubic polynomial $X_3(\mu)$ gives that all roots of this polynomial have negative real parts, as long as $x_{32}, x_{31}$ and $x_{30}$ are positive and satisfy the condition $x_{32}x_{31}>x_{30}$. It is important to note that stability of the endemic steady state $E_2$ does not depend on the susceptibility  and transmissibility modifiers $a_2$ and $\beta_2$. From a biological perspective, this suggests that the capability of the second virus to survive as a single infection is irrelevant from the point of view of its ability to infect cells that are chemically altered and are immune to the first virus. On the other hand, the ability of viruses $d_i$ to trigger a warning signal appears to control whether they can exclude each other or co-exist in a stable equilibrium. Hence, we have proved the following result.
 			\begin{theorem}
 				For the endemic steady state $E_2=(u_1^*,0,u_3^*,0,u_5^*,0,0)$ with $u_1^*=(e_2+s_2)/L_2$, $u_3^*$ and $u_5^*$ given in (\ref{E2eq}), let $x_{30}$, $x_{31}$, $x_{32}$ and $u_B$ be defined by (\ref{x21eq}) and (\ref{eq:u_B}), respectively. Then the steady state $E_2$ is linearly asymptotically stable if and only if the following conditions hold.
 				\begin{enumerate}[(i)]
 					\item $0<u_5^*<u_B,$
 					\item $x_{30}>0$, $x_{31}>0$, $x_{32}>0,$
 					\item $x_{32}x_{31}>x_{30}.$
 				\end{enumerate}
 			\end{theorem}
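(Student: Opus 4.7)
The plan is to compute the Jacobian of system (\ref{sys:NDS}) at $E_2$, exploit the block structure induced by the vanishing coordinates $u_2=u_4=u_6=u_7=u_8=0$, and then apply the Routh--Hurwitz criterion to each factor of the resulting characteristic polynomial.

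First, I would evaluate the Jacobian at $E_2$. Because the steady state lies on the face $\{u_2=u_4=u_6=u_7=u_8=0\}$, many of the partial derivatives of the right-hand side of (\ref{sys:NDS}) drop out, and the Jacobian acquires a block--triangular form. This is what produces the factorisation $X_1(\mu)X_2(\mu)X_3(\mu)=0$ announced in (\ref{eq:chp_E2}): $X_1$ collects the eigenvalues associated with directions that decouple at $E_2$ (contributions from the $u_4$, $u_7$ and $u_8$ equations), $X_2$ governs the $(u_2,u_6)$ subspace coming from the primary--infection channel for virus 1, and $X_3$ controls the $(u_1,u_3,u_5)$ subspace supporting the non--trivial components of $E_2$. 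The coefficients $x_{21}$, $x_{20}$ and $x_{3k}$ in (\ref{x21eq}) would then be read off directly from these blocks.

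Next, I would observe that $X_1$ contributes no obstruction: its three roots are $-e_0$, $-(p_2s_2+e_2)$ and $-[u_3^{*}(L_2\beta_2+d_2\gamma_2)+e_0]$, all real and strictly negative because every model parameter and the coordinate $u_3^{*}$ are positive. For the quadratic factor $X_2$, the Routh--Hurwitz criterion reduces to $x_{21}>0$ and $x_{20}>0$, which, as already derived in the lines leading to (\ref{eq:u_B}), are equivalent to the two constraints $u_5^{*}<u_A$ and $u_5^{*}<u_B$. Positivity of $u_5^{*}$ forces $u_1^{*}<(s_1+e_1)/L_1=\widetilde{u}_1^{*}$, and under this inequality the chain (\ref{eq:u_B}) gives $u_B<u_A$, so the binding condition is precisely $0<u_5^{*}<u_B$, i.e.\ condition (i). For the cubic $X_3$, the Routh--Hurwitz criterion gives exactly the positivity requirements $x_{32},x_{31},x_{30}>0$ together with $x_{32}x_{31}>x_{30}$, which are conditions (ii) and (iii).

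The main obstacle is the bookkeeping needed to verify that the Jacobian really factors as claimed with the specific cubic coefficients $x_{3k}$ of (\ref{x21eq}); once that factorisation is established, each factor is handled by a standard Routh--Hurwitz argument, and necessity of the conditions follows because failure of any one inequality places at least one root of the corresponding factor on or to the right of the imaginary axis. Assembling the three cases yields the stated if--and--only--if characterisation.
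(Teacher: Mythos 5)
Your proposal is correct and follows essentially the same route as the paper: the authors likewise obtain the factorisation $X_1(\mu)X_2(\mu)X_3(\mu)=0$ from the Jacobian at $E_2$, dismiss $X_1$ because its roots are real and negative, reduce the Routh--Hurwitz conditions for the quadratic $X_2$ to $0<u_5^*<u_B$ via the comparison $u_B<u_A$, and apply Routh--Hurwitz to the cubic $X_3$ to obtain conditions (ii) and (iii). The only remaining work in both treatments is the algebraic bookkeeping of the Jacobian entries, which you correctly identify as the main (routine) obstacle.
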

 			
 			\begin{figure}
 				\hspace{-0.5cm}
 				\epsfig{file=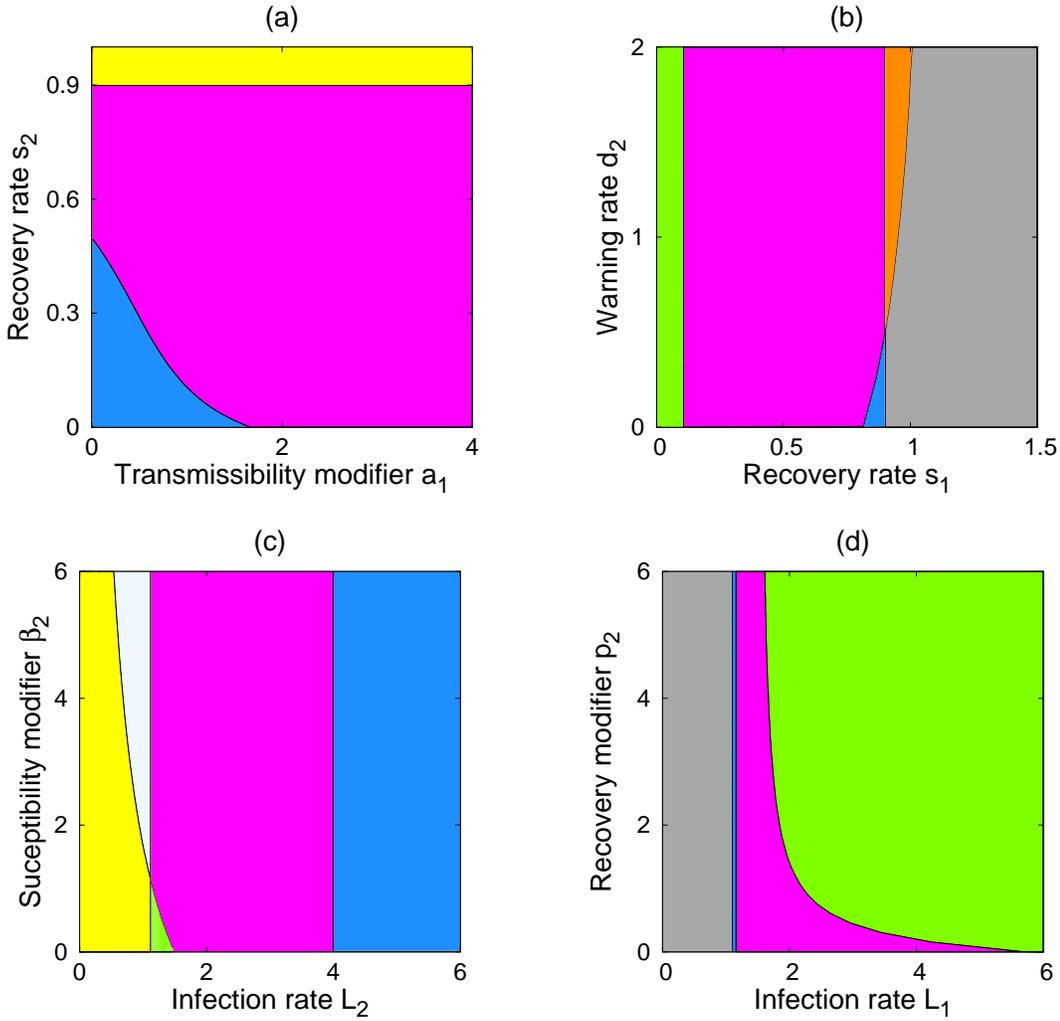,width=14cm}	
 				\caption{Stability of steady states of the system (\ref{sys:NDS}) with parameters from Table~\ref{tab:List of parameters}. Green and blue indicate regions where both endemic steady states $E_1$ and $E_2$ are feasible, but only $E_1$ or $E_2$ is stable, respectively. Magenta shows the region where all three infected steady states are feasible, but only the syndemic steady state $S$ is stable. Yellow is the area where only $E_1$ is feasible and stable, whereas grey is the area where only $E_2$ is feasible and stable. White and orange is where the syndemic steady state is stable, whereas $E_1$ or $E_2$, respectively, is also feasible (and unstable).}
 				\label{fig:2}
 			\end{figure}
 			
 			\begin{remark}
 				The result of this {\bf Theorem} can be applied to the analysis of stability of the endemic steady state $E_1$ by swapping parameter indices $1$ with $2$, and replacing variables $u_3^*$ and $u_5^*$ with $u_2^*$ and $u_4^*$, respectively, as a consequence of the model symmetry.Unlike some other models of multi-strain/multi-virus infections \cite{Gupta1999,Allen2003,Castillo-Chavez1996}, the complexity of the model (\ref{sys:NDS}) prevents one from expressing the conditions for stability of single-virus or co-existence equilibria in a closed form depending only on two basic reproduction numbers.
 			\end{remark}
 			
 			Since the syndemic steady state $S$ cannot be found in a closed form, it does not prove possible to derive analytical conditions for stability of this steady state. Hence, to understand how stability changes with parameters,  one has to resort to numerically compute eigenvalues of the Jacobian of the linearisation of system (\ref{sys:NDS}) near the steady state .
 			
 			\section{Numerical stability analysis and simulations}
 			
 			 Due to RNAi being a very complicated multi-component process, obtaining accurate parameters values to be used in a mathematical model is extremely difficult and often impractical, as  some parameters cannot currently be measured, or even when they do, there is a very wide variability in the reported values \cite{Melnyk2011,Liang2012, Himber2015}. Parameter values that define viral properties and modifiers in the context of this study are equally problematic to obtain, as one would require virus-specific information about both the cell-to-cell and long-distance transmission of the virus. For example, in the case of the Tobacco mosaic virus, the infection can on average spread from one cell to another every 3-4 hours depending on the strain of the virus and the temperature \cite{Kawakami2004}, and although this information can provide some intuition about parameter values, it is not sufficient for estimating the actual infection rate.
 			 
 			To better understand the effects of different parameters on feasibility and stability of different steady states of the system (\ref{sys:NDS}), we use {\bf Theorem 3.1} and numerical computation of eigenvalues to identify parameter regions associated with existence and stability of all steady states. To this end, we start with baseline parameter values given in Table~\ref{tab:List of parameters} and allow some of the parameters to vary. Since model (\ref{sys:NDS}) has quite a large number of different parameters, below we present the results for only some parameter combinations that illustrate the diversity of possible scenarios, and qualitatively similar results can be obtained when other parameters are varied. Plotting the percentages of infected cells for each steady state in the same parameter space allows us to investigate possible changes in the magnitude of the infected cell population between different steady states.
 			
 			Figures \ref{fig:2}, \ref{fig:3} and \ref{fig:4} illustrate earlier analytical conclusions that the two endemic steady states $E_1$ and $E_2$ are only feasible and stable if the recovery/death rates of infected cells are sufficiently low. On the other hand, one expects that a virus can only survive if its infection rate is adequately high, as observed in Fig.~\ref{fig:2}(d) and Fig.~\ref{fig:4}(a). If either one of the recovery/infection rates is below or above a certain threshold, it is easy to see that the syndemic steady state disappears, and only one of the two viruses survives. However, Figures~\ref{fig:2}(a), (c)  and \ref{fig:3}(c), together with additional computations not shown here, suggest that by increasing parameters $a_{1,2}$, i.e the transmissibility modifiers, or the susceptibility modifiers $b_{1,2}$, the system can generally move from one of the endemic steady states to a stable syndemic equilibrium. This suggests that the most competitive viral strain, which under different circumstances would be capable of excluding a secondary infection, might instead  facilitate the survival of a secondary strain. Cells that have been chemically  altered by the immune response to the more aggressive strain can now serve as ideal targets in which the second strain could proliferate. Since for the fixed values of other parameters, infection rates $L_1$ and $L_2$ are proportional to the two basic reproduction numbers, $R_{01}$ and $R_{02}$, respectively, Figure~\ref{fig:4}(a) effectively is equivalent to figures demonstrating the dependence of steady states on basic reproduction numbers in two-strain models of infectious diseases \cite{Gupta1996,Andreasen1997}.
			
 			\begin{figure}
 				\hspace{-0.5cm}
 				\epsfig{file=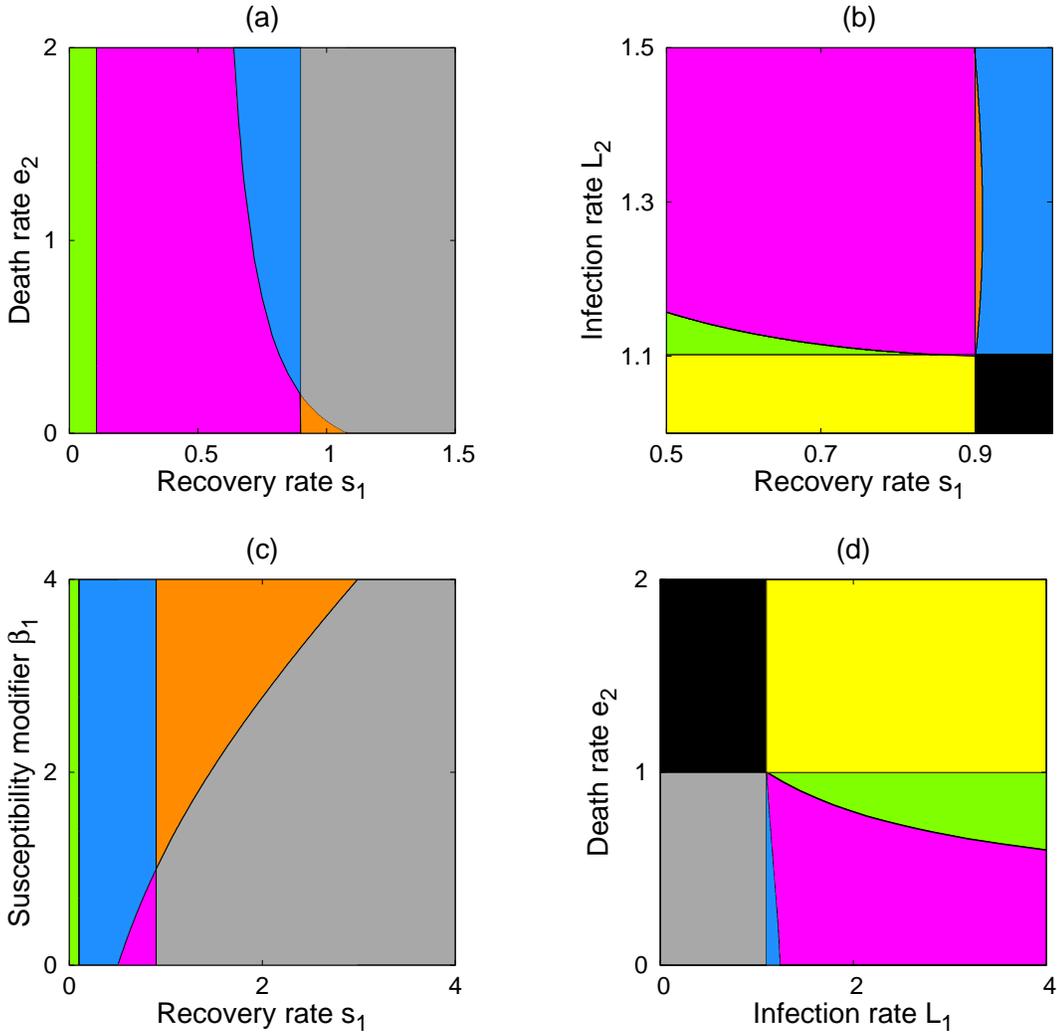,width=14cm}
 				\caption{Stability of the steady states of the system (\ref{sys:NDS}) with parameters from Table~\ref{tab:List of parameters}. Green and blue indicate regions where both endemic steady states $E_1$ and $E_2$ are feasible, but only $E_1$ or $E_2$ is stable, respectively. Magenta shows the region where all three infected steady states are feasible, but only the syndemic steady state $S$ is stable. Yellow is the area where only $E_1$ is feasible and stable, whereas grey is the area where only $E_2$ is feasible and stable. Orange is where the syndemic steady state is stable, and $E_2$ is feasible but unstable. Black is the region where only the disease-free steady state is feasible and stable.}
 				\label{fig:3}
 			\end{figure}
 			
 			Figures~\ref{fig:2}(d) and \ref{fig:3}(a) show that when one of the recovery modifiers  $p_{1,2}$ is increased, the system can move from the syndemic to one of the endemic equilibria $E_{1,2}$, thus behaving in a  qualitatively opposite way to an increase of the corresponding  parameter pair $\left\{a_i,\beta_i\right\}$. This occurs when cells with acquired immunity to one of the viruses are subsequently infected with another virus but have a faster recovery. As this reduces the overall spread of the secondary infection, it will inevitably allow the primary virus to dominate and eventually be the sole survivor in the host. In  Fig.~\ref{fig:2}(b) one observes that by increasing the dimensionless warning rate $d_2$ we can move from a parameter region where only the endemic steady state $E_2$ is feasible and stable (a grey region) to a region, where the syndemic  equilibrium is also stable (an orange region). This suggests that the plant immune response to the second virus can establish conditions that are more favourable to the first virus. Thus, in the case of a double infection, it is possible for a viral infection to persevere in the presence of the host's immune response despite being unable to do so as a single infection. This means that the propagating component of the immune response plays a significant role in the interactions between two viruses and can dictate whether both of them can survive in a single host.
			
			\begin{figure}
 				\hspace{-0.5cm}
 				\epsfig{file=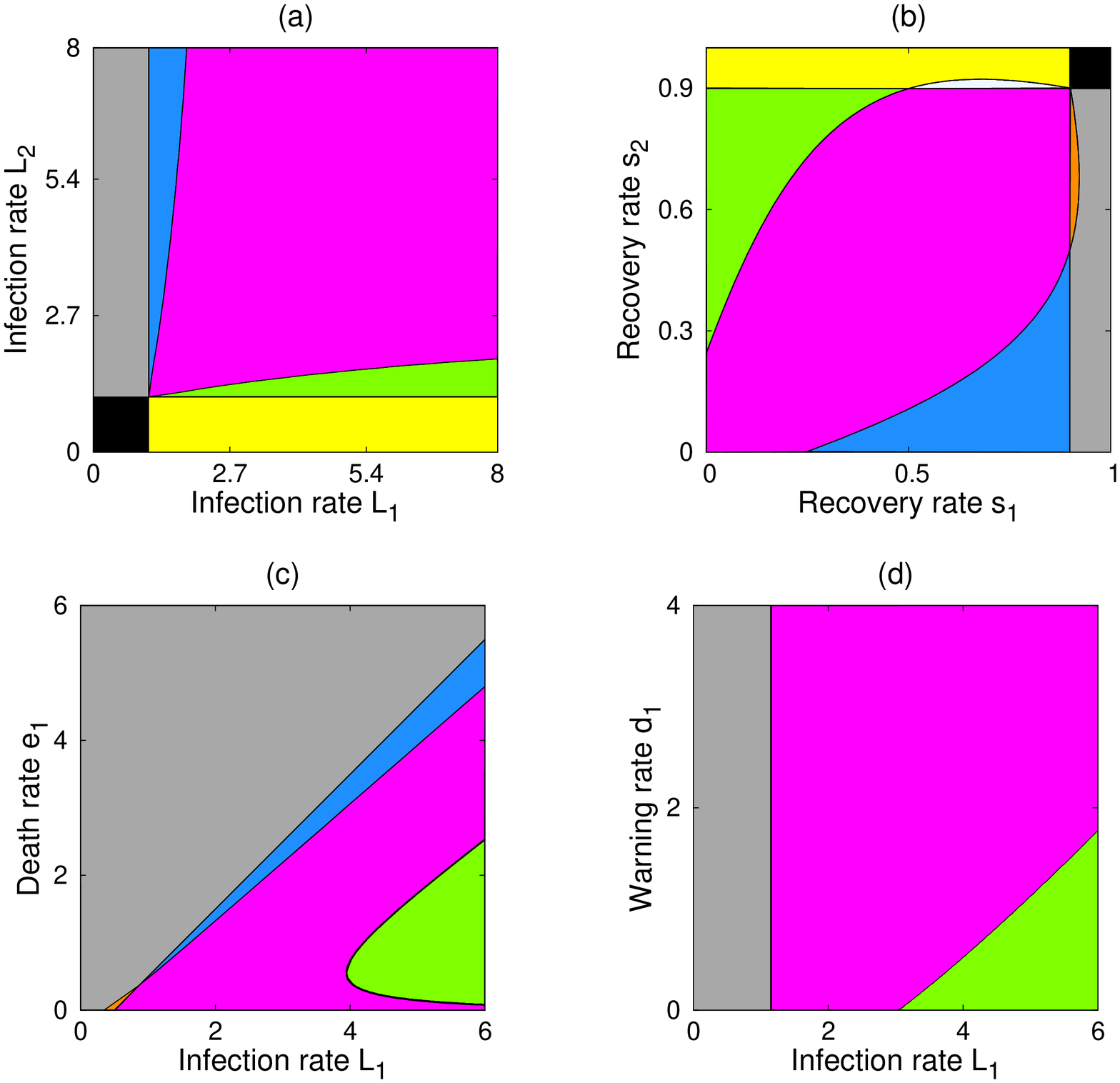,width=14cm}
 				\caption{Stability of steady states of the system (\ref{sys:NDS}) with parameters from Table~\ref{tab:List of parameters}. Green and blue indicate regions where both endemic steady states $E_1$ and $E_2$ are feasible, but only $E_1$ or $E_2$ is stable, respectively. Magenta shows the region where all three infected steady states are feasible, but only the syndemic steady state $S$ is stable. Yellow is the area where only $E_1$ is feasible and stable, whereas grey is the area where only $E_2$ is feasible and stable. White and orange is where the syndemic steady state is stable, whereas $E_1$ or $E_2$, respectively, is also feasible (and unstable). Black is the region where only the disease-free steady state is feasible and stable.}
 				\label{fig:4}
 			\end{figure}	
			
 				Recall that in model (\ref{sys:NDS}), the two viruses are considered to cooperate with each other when $a_i,\beta_i>1$, have a neutral relationship when $a_i,\beta_i = 1$, and ``antagonize" each other when $a_i,\beta_i<1$, $i=1,2$. One should also note the existence of other more complicated scenarios as each of $a_1$, $a_2$, $\beta_1$ and $\beta_2$ can be less than, greater than or equal to one. For example, if $a_1,\beta_1 > 1$ and $0<a_2,\beta_2<1$ then, cooperation of the two viruses will be considered to benefit mostly the first virus, thus being unequal. On the other hand, for $a_2,\beta_2 > 1$ and $a_1,\beta_1 = 0$, the relationship is completely one-sided in favour of the second virus. Figures~\ref{fig:5}(a) and (b) suggest that the biological interactions between different viruses may sometimes disproportionately favour one of the viruses and decrease the potency of the second infection, that is to say that one of the viruses experiences less spread during a co-infection when compared to its single-virus infected steady state. This is clearly evident in Fig.~(\ref{fig:5})(b): for $\beta_2\le 0.87$ only the first virus is present, whereas for  $\beta_2 > 0.87$ the system moves into the syndemic steady state where now both viruses are able to survive, but the first virus is not as widely spread as before. One should note that this result comes at the cost of increasing the total number of infected cells, suggesting that it might not always be the preferable outcome for the plant. Similarly, Figure~(\ref{fig:5})(a) shows that for small values of the transmissibility modifier $a_1$ combined with a higher infection rate $L_2>L_1$ (which also implies $R_{02}>R_{01}$), only the second virus is able to survive in the host. As the value of $a_1$ increases, the picture changes, and the system moves to a syndemic steady state, where not only both of the viruses are able to survive, but given sufficiently high value of $a_1$, the first virus can become dominant. This also suggests that increasing $a_i$ is qualitatively interchangeable increasing $\beta_j$ for $j\neq i$.  Figures~\ref{fig:5}(c) and (d) show how depending on the level of cooperation between the two viruses, i.e for sufficiently high values of $a_1$ and $a_2$,  it can be beneficial for the viruses to co-exist, as they can both infect a bigger biomass of the host compared to their respective one-virus steady states, possibly resulting in a chronic condition that is more severe. These results show that sufficient levels of mutual cooperation between two viruses promote their virulence and ensure that neither of them becomes eradicated, which eventually leads to a persistent double infection with parameter values determining the magnitude of each infection.
				
 			\begin{figure}
				\epsfig{file=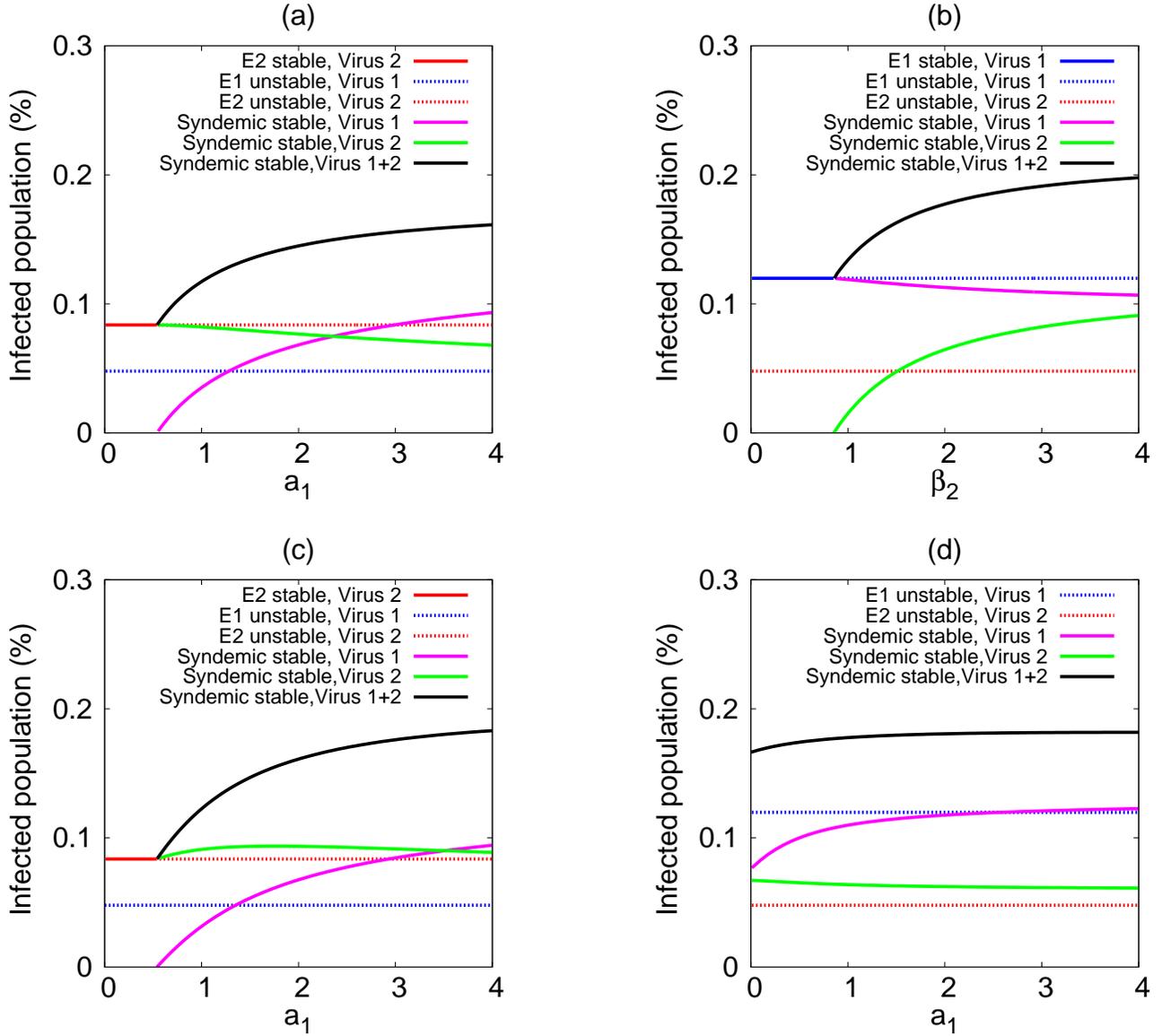,width=\linewidth}
 				\caption{Stability of endemic and syndemic steady states of the system (\ref{sys:NDS}) with parameter values from Table~\ref{tab:List of parameters}. Stable (unstable) steady states are indicated by solid (dotted) lines for single-virus endemic steady states $E_1$ (blue) and $E_2$ (red). The percentage of cells at the syndemic steady state is illustrated for virus 1 (magenta), virus 2 (green), and the total infected population (black). (a) $L_1<L_2 = 2$. (b) $L_2<L_1 = 3.$ (c) $L_1<L_2 = 2$ and $a_2 = 2$. (d) $L_2<L_1 = 3$, $a_2 = 2$ and $\beta_1 = 1.5$.}
 				\label{fig:5}
 			\end{figure}
				
	If the cooperation between the two viruses is unequal or one-sided, it is possible that the least benefited virus will experience less spread compared to its single-virus infected steady state. To investigate scenarios where both viruses "antagonize" each other we solved the system at $a_i,\beta_i =0.5$, $i=1,2$. One result is given in Fig.~(\ref{fig:6})(a), and it shows that increasing $\beta_2$ decreases the presence of the first virus, but similarly to our previous results it increases the overall level of infection. The most interesting case is shown in Fig.~(\ref{fig:6})(b), where  adequately increasing the warning rate $d_2$, not only the percentage of cells infected with the second virus goes down, but also the total number of infected cells is reduced. One also observes in this Figure that although the number of cells infected with the first virus is slowly increasing, it is still at a much lower level than what it was in the absence of the second virus, i.e compared to the steady state $E_1$ which is now unstable. This situation represents an ideal scenario, where inoculating the target plant with a less harmful virus or viral strain can offer partial protection against another specific virus or strain, thus potentially minimizing damage to the host.
								
 			To demonstrate different kinds of dynamics that can be exhibited by the model, we have solved the system (\ref{sys:NDS}) numerically for different combinations of parameters, and the results are shown in Fig.~\ref{fig:num_sim}.  Figure~\ref{fig:num_sim}(a) shows the solution of the model that approaches the stable syndemic steady state, with all compartments having positive values.  As we discussed earlier, from a biological perspective this represents the cases where interactions between the two viruses  facilitate the survival of both viral species within the same host. Figures~\ref{fig:num_sim}(b) and (c) illustrate situations where one of the viruses survives, while the other one is eradicated by the plant immune system, and Figure~\ref{fig:num_sim}(d) demonstrates the case where the plant makes a full recovery.

 			\section{Discussion}
 			
 			In this paper we have derived and analysed a mathematical model of biological interactions between two viruses and a single plant host, with particular account for RNA interference.
			
			\begin{figure}
				\epsfig{file=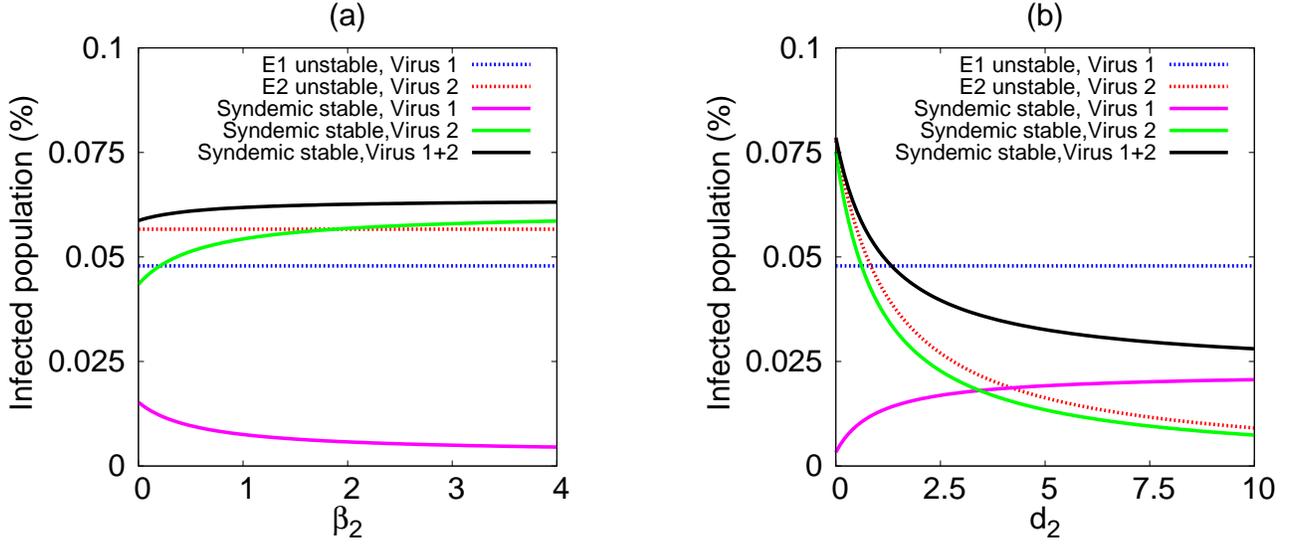,width=\linewidth}
 				\caption{Stability of endemic and syndemic steady states of the system (\ref{sys:NDS}) with  $L_1<L_2 = 1.6$ and the other parameter values given in Table~\ref{tab:List of parameters}. (a) $a_{1,2}=\beta_1=0.5$. (b) $a_{1,2}=\beta_{1,2}=0.5$ . Stable (unstable) steady states are indicated by solid (dotted) lines for single-virus endemic steady states $E_1$ (blue) and $E_2$ (red). The percentage of cells at the syndemic steady state is illustrated for virus 1 (magenta),  virus 2 (green), and the total infected population (black).}
 				\label{fig:6}
 			\end{figure}
			
			Our results have shows that RNA interference can provide a mechanism for cross-protection, and a co-infection can either increase or decrease the overall potency of individual infections, illustrating how cross-protection or cross-enhancement can occur between the two viruses. The framework developed in this paper can be directly applicable to analysis of RNAi-mediated interactions for many combinations of  plant viruses, with examples including co-infections with Soybean mosaic virus and Alfalfa mosaic virus \cite{Malapi-Nelson2009}, as well as Abutilon mosaic virus and Cucumber mosaic virus \cite{Wege2007}. The model can also be used to obtain insights into how one could control viral diseases through cross-protection and, by extension, through gene and antiviral therapy, where genetically modified viruses are introduced to the host. Unlike the wild type strains, these modified viruses can be engineered to deliver specific therapeutic siRNA, which through the process of RNA interference would trigger immune response, thus acting as a powerful vaccination strategy \cite{Silva2002,Caplen2004,Soifer2007}.
 			
 			To achieve greater biological realism, we have assumed that the new plant growth depends on the availability of healthy cells which can be impeded once the plant becomes infected. Stability analysis of the steady states has demonstrated the significance of different parameters of the model and showed how they dictate the dynamical behaviour exhibited by the system.
			
			\begin{figure}
 				\centering
 				\epsfig{file=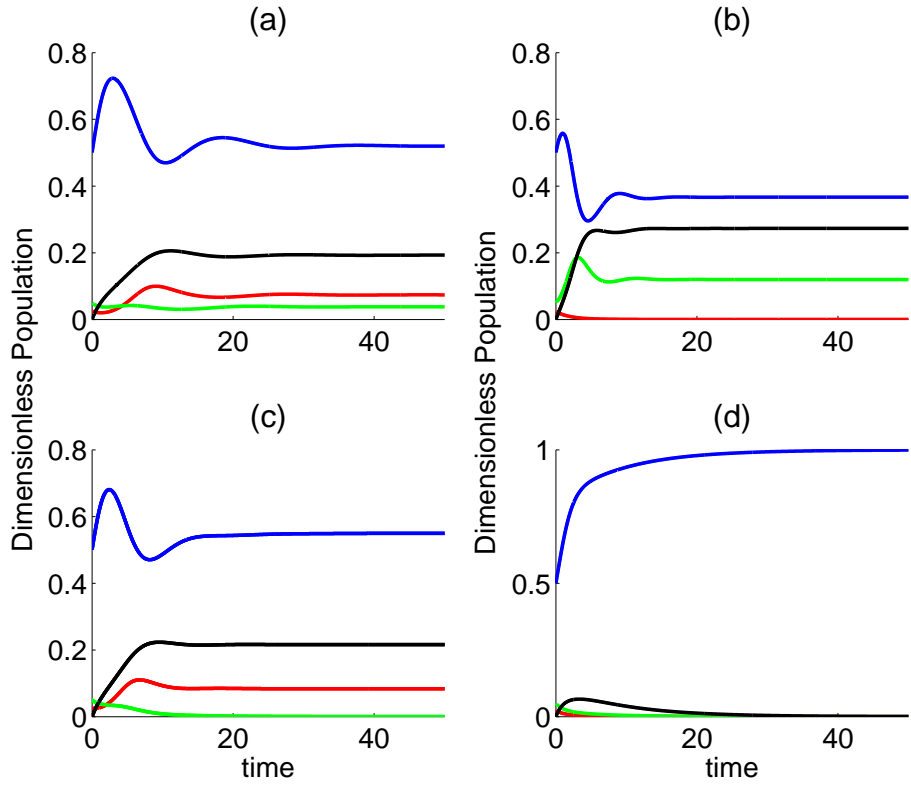,width=14cm}
 				\caption{Numerical simulations of the model (\ref{sys:NDS}) with parameter values from Table~\ref{tab:List of parameters}. Colours represent dimensionless populations of susceptible cells (blue), cells infected with the first (red) and second (green) virus, the total population of cells with immunity to one or both viruses (black). (a) Stable co-existence of viruses: $a_1=3$. (b) Stable single-virus state $E_2$:  $L_1 = 1$ and $L_2 = 3$. (c) Stable single-virus state $E_1$: $L_1 = 2$ and $p_2 = 3$. (d) Stable disease-free steady state $E_{DF}$: $s_1 = 1$ and $L_2 = 1$.}
 				\label{fig:num_sim}
 			\end{figure}
			
			One should note that in the current model it is impossible for all cell populations to die, as there will always be some new growth taking place to replace the parts of the plant that are lost either naturally or due to infections. This is true despite the growth penalty introduced by allocating some of the resources to infected parts of the plant.  Even if a plant were to experience a severe case of stunting, it would be highly unlikely that every healthy cell would become infected, and, therefore, lead to the death of the plant. Although our model cannot capture this scenario, realistically, such events do occur quite rarely in nature.
 			
 			Stability of the disease-free equilibrium and the feasibility of the two single-virus endemic steady states depends on the two basic reproduction numbers $R_{01}$ and $R_{02}$. In our model these quantities are represented as functions of only the infection, recovery and death rates of infected cells for each  strain, but they are not affected by the propagating component of the immune response. This suggests that a faster mobile signal can at most help the plant to recover faster (as determined by the above-mentioned factors) but, by itself it is not sufficient for a recovery. However, this picture changes when the stability of the syndemic steady state is considered. Our results show that the warning signal plays a significant role in determining whether both viruses can persist simultaneously, and as such, it controls situations where the plant is able to support some constant level of both infections. If the two viruses are sufficiently immunologically related, so that the immune response trigged by a primary infection with a less virulent strain could induce a sufficient response against a secondary strain, then the least harmful of the two viruses becomes dominant, and the plant experiences a degree of cross-protection which may sometimes result in the increased total population of infected cells.
 				
 			Analysis of the model has demonstrated that the total population of infected cells during a co-infection can sometimes, but not always, be higher than during a single infection, for which there are two possible explanations. One possibility is that the two different infections simply increase the overall rate of infection. Another aspect is that the two viruses only have to compete for susceptible cells, as there is a source of cells that might be exclusively available to each of the viruses, i.e cells that have acquired immunity to one virus may be less or more susceptible to the other virus. Our results have shown that when two viruses ``antagonize" each other, i.e  $a_i,\beta_i<1$, for sufficiently high warning rates, not only can one minimize the spread of a specific virus, but the overall infection can also reduce. Hence, depending on the virulence of the two strains, one might choose to either avoid the introduction of a secondary viruses, or instead use it in order to produce the more favourable outcome.
 				
 			If the two viruses are immunologically unrelated and co-infecting the same plant, they can indirectly promote each other by inadvertently making cells they can no longer infect more susceptible to the other virus. Hence, despite the fact that both viruses are effectively competing for the same resource, there is always some exclusive source of potential cells in which the infection could survive, with the potency of individual infections strongly dependent on the interaction between the two viruses. Another important  result  is that the syndemic steady state can potentially be stable in parameter regions where only one of the endemic steady states is feasible, implying that a secondary virus can only survive when another infection is present. 
 			
 			One possible extension of the work presented in this paper is to explicitly include in the model time delays associated with plant maturation time, and with delayed propagation of the RNAi signal, as has been recently done in a simpler model of immune response to a single viral infection \cite{Neofytou2016}. Another interesting phenomenon to consider is the possibility of cells being occupied by two viruses simultaneously, which would allow for a wider spectrum of interactions between the viruses and their host. This could include super-infection of individual cells, viral interference or recombination events that can give rise to additional strains.
 			           
			           \section*{Acknowledgements} The authors would like to thank two anonymous reviewers for
helpful comments and suggestions that have helped to improve the presentation in this paper.

 		\end{document}